\newtheorem{proposition}{Proposition}
\newtheorem{remark}{Remark}
\newtheorem{model}{Model}
\begin{document}
%\linenumbers
\begin{frontmatter}

\title{Handling missing data in large healthcare dataset:\\ a case study of unknown trauma outcomes}

\author[LeicMath]{E.M. Mirkes}
\ead{em322@le.ac.uk}
\author[LeicMed]{T.J. Coats}
\ead{tc61@le.ac.uk}
\author[LeicMath]{J. Levesley}
\ead{jl1@le.ac.uk}
\author[LeicMath]{A.N. Gorban\corref{cor1}}
\ead{ag153@le.ac.uk}

\address[LeicMath]{Department of Mathematics, University of Leicester, Leicester, LE1 7RH, UK}
\address[LeicMed]{Emergency Medicine Academic Group, Department of Cardiovascular Sciences, University of Leicester, Leicester, LE1 7RH, UK}
\cortext[cor1]{Corresponding author}

\begin{abstract}
Handling of missed data is one of the main tasks in data preprocessing especially in
large public service datasets. We have analysed data from the Trauma Audit and Research
Network (TARN) database, the largest trauma database in Europe. For the analysis we used
165,559 trauma cases. Among them, there are 19,289 cases (13.19\%) with unknown outcome.
We have demonstrated that  these outcomes are not missed `completely at random' and, hence, it
is impossible just to exclude these cases from analysis despite the large amount of
available data. We have developed a system of non-stationary Markov models for the handling of
missed outcomes and validated these models on the data of 15,437 patients which
arrived into TARN hospitals later than 24 hours but within 30 days from injury. We used
these Markov models for the analysis of mortality. In particular, we corrected the observed
fraction of death. Two na\"ive approaches give 7.20\% (available case study) or 6.36\%
(if we assume that all unknown outcomes are `alive'). The corrected value is 6.78\%.
Following the seminal paper of Trunkey (1983) the multimodality of mortality curves has become 
a much discussed idea. For the whole analysed TARN dataset the coefficient of mortality
monotonically decreases in time but the stratified analysis of the mortality gives a
different result: for lower severities the coefficient of mortality is a non-monotonic
function of the time after injury and may have maxima at the second and third weeks. The
approach developed here can be applied to various healthcare datasets which experience the
problem of lost patients and missed outcomes.
\end{abstract}
\begin{keyword}
Missed data \sep Big data \sep Data cleaning \sep Mortality \sep Markov models \sep Risk
evaluation
\end{keyword}

\end{frontmatter}

\section{Introduction}\label{introd}

{Enthusiasm for the use of big data in the improvement of health service is huge but there is
a concern that without proper attention to some specific challenges the mountain of 
big data efforts} will bring forth a mouse \cite{Adler2013}. Now, there is no technical
problem with ``big'' in healthcare. Electronic health records include hundreds of
millions of outpatient visits and tens of millions {of} hospitalizations, and these numbers
grow exponentially. The main problem is in quality of data.

``Big data'' {very often means} ``dirty data'' and the fraction of {\em data
inaccuracies} increases with data volume growth. {Human} inspection at the big data scale
is impossible and there is a desperate need {for} intelligent tools for accuracy and
believability control.

The second big challenge of big data in healthcare is {\em missed information}. There may
be many reasons for data incompleteness. One of them is {in health} service
``fragmentation''. This problem can be solved partially by the national and international
unification of the electronic health records (see, for example, Health Level Seven
International (HL7) standards \cite{Dolin2006} or discussion of the template for uniform
reporting {of} trauma data  \cite{TARN1_2008}). However, some fragmentation is
unavoidable due to the diverse structure of the health service. In particular, the modern
tendency for personalization of medicine can lead to highly individualized sets of
attributes for different patients or patient groups. There are several universal
technologies for the handling of missing data
\cite{Rubin1987,Rubin1996,Pigott2001,Schafer2002,Graham2007,Graham2012,Donders2006}.
Nevertheless, the problem of handling missed values in large healthcare datasets  is
certainly not completely solved. It continues to attract the efforts of many researchers
(see, for example, \cite{Cismondia2013}) because the popular universal tools can lead to
bias or loss of statistical power \cite{Gorelick2006,Sterne2009}. For each system, it
is desirable to combine various {existing} approaches for the handling of missing data (or to
invent new ones) to minimize the damage to the results of data analysis. For the best
possible solution, we have to take into account the peculiarities of each database and to
specify the further use of the cleaned data (it is desirable to understand in advance how
we will use the preprocessed data).

In our work we analyze missed values in {the} TARN database \cite{TARNweb}. We use the
preprocessed data for:
\begin{itemize}
\item  the evaluation of the risk of death,
\item the identification of the patterns of mortality,
\item approaching several old problems like the Trunkey hypothesis about the
    trimodal distribution of trauma mortality \cite{Trunkey1983}.
\end{itemize}

The `two stage lottery' non-stationary Markov model developed in the sequel can be used for
the analysis of missing outcomes in a much wider context than the TARN database and could
be applied to the handling of data gaps in healthcare datasets which experience the
problem of transferred and lost patients and missing outcomes.

In this paper we analyze the unknown outcomes. The next task will be the analysis of missed
data in the most common ``input'' attributes.

\section{Data set}

There are more than 200 hospitals which send information to TARN (TARN hospitals). This
network is gradually increasing.  Participation in TARN is recommended by the Royal
College of Surgeons of England and the Department of Health. More than 93\% of hospitals
across England and Wales submit their data to TARN. TARN also receives data from Dublin,
Waterford (Eire), Copenhagen, and Bern.

We use TARN data collected from 01.01.2008 (start of treatment) to 05.05.2014 (date of
discharge). The database contains 192,623 records and more than 200 attributes. Sometimes
several records correspond to the same trauma case because the patients may be
transferred between TARN hospitals. We join these records. The resulting database
includes data of 182,252 different trauma cases with various injuries.

\begin{figure*}
\begin{centering}
\includegraphics[width= 0.8\textwidth]{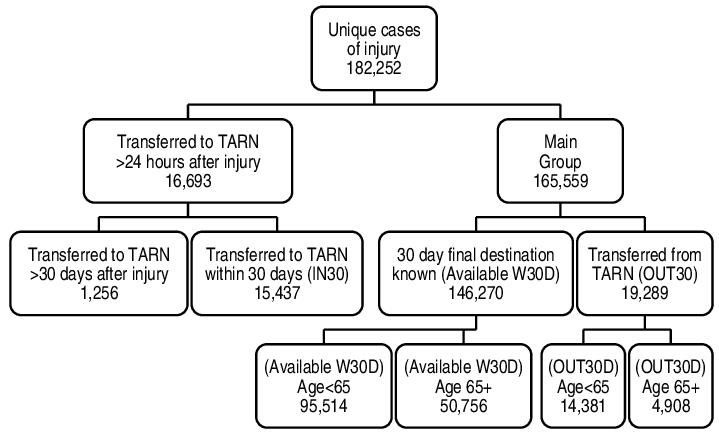}
\caption{The groups of the patients for analysis of mortality. FOD in the group
`Available W30D' can be calculated from the data directly. Mortality in the group `OUT30'
will be evaluated on the basis of the non-stationary Markov model. The group of 16,693
patients which arrived (were transferred from other institutions) to TARN hospitals later
than 24 hours after injury was excluded from the mortality analysis. Its subgroup `IN30'
of 15,437 patients is used for validation of the Markov model for `OUT30' group. The subgroups  with age$<65$ and age$\geq 65$ should be separated because for age$\geq 65$ the following traumas 
are excluded from the database: Acetabulum fractures (AIS 8562xx), Pelvic/Acetabulum fractures (AIS 8563xx), 
Pelvic ring fractures (AIS 8561xx), Pubic rami and Femoral neck fractures (AIS 85316x). 
\label{MainGroups}}
\end{centering}
\end{figure*}

16,693 records correspond to patients, who arrived (transferred from other institutions)
to TARN hospitals later than 24 hours after injury. This sample is biased, for example
the Fraction Of Dead outcomes (FOD) for this sample is 3.34\% and FOD for all data is
6.05\%. This difference is very significant for such a big sample. (If all the outcomes
in a group of the trauma cases are known then we use the simple definition of FOD in the
group: the ratio of the number of registered deaths in this group to the total number of
patients there. Such a definition is not always applicable. The detailed and more
sophisticated analysis of this notion follows in the next {section}.) We remove these
16,693 trauma cases from analysis but use them later for validation of the ``mortality
after transfer'' model. Among them, there are 15,437 patients who arrived at a TARN
hospital within 30 days after injury. We call this group `IN30' for short
(Fig.~\ref{MainGroups}).

As a result we have 165,559 records for analysis (`Main group'). This main group consists
of two subgroups: 146,270 patients from this group approached TARN during the first day
of injury and remained in TARN hospitals or discharged to a final destination during the
first 30 days after injury. We call this group the `Available within 30 days after
injury' cases (or `Available W30D' for short). The other 19,289 patients have been
transferred  within 30 days after injury to a hospital or institution (or unknown
destination) who did not return data to the TARN system. We call them `Transferred OUT OF TARN within 30 days after injury'  or just `OUT30' (Fig.~\ref{MainGroups}).

The patients with the non-final discharge destinations `Other Acute hospital' and `Other
institution' were transferred from a TARN hospital to a hospital (institution) outside
TARN and did not return to the TARN hospitals {within} 30 days after injury.

The database includes several indicators for evaluation of the severity of the trauma
case, in particular, Abbreviated Injury Scale (AIS), Injury Severity Score (ISS) and New
Injury Severity Score (NISS). For a detailed description and comparison of the scores
we refer readers to reviews \cite{Lefering2002,Lesky2014}. The comparative study of {predictive}
ability of different scores has a long history
\cite{Goldfarb1977,Sacco1988,Champion1996,Rutledge1998}. The scores are used for
mortality predictions and {are} tested on different datasets
\cite{Sullivan2003,Lavoie2004,Tay2004,TARN8_2006}.
{In the database, there exist no  gaps in AIS (and hence ISS and NISS) values even for patients rapidly dying. Most severely injured patients have a CT `pan-scan' within the first hour or two of injury which is likely to define all life-threatening injuries. In addition the report from the post-mortem examination is used in the compilation of an injuries list which is the basis of AIS, and hence ISS and NISS, scoring.}

\section{Definitions and distributions of outcomes}

The widely used definition of the endpoint outcome in trauma research is survival or
death within 30 days after injury \cite{Clark2004,TARN8_2006,Skaga2008}.

A substantial number of TARN in-hospital deaths following trauma {occur} after 30 days:
there are 957 such cases (or 8\% of TARN in-hospital death) among 11,900 cases with
`Mortuary' discharge destination. This {proportion} is practically the same in the
main group (165,559  cases): 894 deaths after 30 days in hospital (or
7.9\%) among 11,347 cases with `Mortuary' discharge destination.

{Death} later than 30 days after injury may be considered as caused by co-morbidity
rather than the direct consequence of the injury \cite{TARN8_2006}. These later deaths
are not very interesting from the perspective of an acute trauma care system (as we
cannot influence them), but they might be very interesting from the perspective of a
geriatric rehabilitation centre or of an injury prevention program for elderly patients.

On the other hand, when ``end of acute care'' is used as an outcome definition then a
significant portion of deaths remains unnoticed. For example, in the 3332 trauma cases
treated in the Ulleval University Hospital (Oslo, Norway, 2000-2004) 18\% of deaths
occurred after discharge from the hospital \cite{Skaga2008}.

The question of whether it is possible to {neglect trauma} caused mortality within 30 days
after trauma for the patients with the discharge destination `Home', `Rehabilitation' and
other `recovery' outcomes is not trivial \cite{Skaga2008}. Moreover, here are two
questions:
\begin{itemize}
\item How {do we} collect all the necessary data after discharge within 30 days after
    trauma -- a technical question?
\item How {do we} classify the death cases after discharge within 30 days after trauma;
    are they consequences of the trauma or should they be considered as comorbidity with some additional reasons?
\end{itemize}
The best possible answer to the first question requires the special combination of
technical and business process to integrate data from different sources. The recent
linkage from TARN to the Office for National Statistics (ONS) gives the possibility to access
the information about the dates of death in many cases. It is expected that the further
data integration process will recover many gaps in the outcome data.

The last question is far beyond the scope of data management and analysis and may be
approached from different perspectives. Whether or not the late deaths are important in a
model depends on the question being asked. From the data management perspective, we have
to give the formal definition of the outcome in the terms of the available database
fields. It is impossible to use the standard definition as survival or death within 30
days after injury because these data are absent. We define the outcome `Alive W30D' for
the TARN database being as close to the standard definition as it is possible.

In the TARN database discharge destinations `Home (own)', `Home (relative or other carer)',
`Nursing Home', and `Rehabilitation' are considered as final.  If we assume that these
trauma cases have the outcome `Alive  W30D' then we loose some cases of death. From the
acute care perspective these cases can be considered as irrelevant. Let us accept this
definition. There still remain many cases with unknown outcome. For analysis of these
cases we introduce the outcome category `Transferred'. In this category we include the
cases which left the TARN registry to a hospital or other institution outside TARN, or to
an unknown destination within 30 days. The relations between the discharge destinations
and these three outcomes are presented in Table~\ref{Table:2}.

\begin{table*}{\footnotesize\begin{center}
\caption{\label{Table:2} Distribution of outcomes in the main group (W30D means within 30
days after injury).}
\begin{tabular}{ccccc}
\hline Subgroup  & Alive W30D & Dead  W30D  & Unknown  & Total \\
\hline Available W30D & 135,733 &10,537 &0 & 146,270 \\
\hline OUT30 & 0$^*$ & 0$^*$ & 19,279 & 19,289 \\
\hline Total & 135,733& 10,537 & 19,289 & 165,559 \\    \hline
\end{tabular}
\\ $^*$No known survival or deaths.
\end{center}}
\end{table*}

As we can see from Table~\ref{Table:2}, 19,289  trauma cases (or 11.35\% of all cases)
have unknown outcome. The first standard question is: can we delete these data and {apply
{\em available case analysis}}? For this purpose we have to consider these outcome data
as ``Missing Completely at Random'' (MCAR)
\cite{Rubin1976,Rubin1987,Pigott2001,Schafer2002}. This is definitely not the case. The
group with unknown outcomes is exactly the `OUT30' group. The probability of belonging
to this group depends, for example, on the severity of injury (which can be measured, by the
maximal severity, by NISS, by GCS or by another severity score). The $\chi^2$ test of
independence shows that transfer depends on the severity with $p$-value $p<10^{-300}$
({this is the probability that} such a strong dependence might appear by chance).  {The 
most practical (or `purposefull' \cite{Fuchs2013}) idea is to consider the missed outcome data as ``Missed at Random'' (MAR).
The assumption of MAR does not imply that the data are
missing randomly, but rather that the missing values are correlated with variables recorded
in the dataset \cite{Fuchs2013}}.

One can consider all these cases as alive because these patients have been alive at the
{point of} discharge from TARN hospitals. If we consider all transferred as alive then
the FOD is 6.35\%. If we delete all the transferred patients (study only the Available
W30D group) then the FOD is 7.2\%. If we test this hypothesis on 15,437 patients of the
group `IN30' transferred {\em to} TARN hospitals from outside the network within 30 days
after injury then we find that the nonzero mortality {for} them (3.10\%).

The data table with known outcomes is necessary for further machine learning and the
main goal is outcome prediction and risk evaluation. 

We choose to remove the OUT30D group from data table  but simultaneously
to adjust the weights of the retained cases to compensate for the removal. The
information about the OUT30D cases will be used in the construction of the
weights. It is necessary to evaluate the mortality of the patients transferred from TARN
before removing their records and reweighting of the rest. In the next section we
develop, identify and validate Markov models for the analysis of the mortality of
transferred patients.

Another method for handling missed outcomes  is multiple imputation of 
the outcomes (about multiple imputations see, for example,
\cite{Graham2012}). Both methods use similar stochastic {models} of mortality and
transfer. The large number of cases allows us to use the reweighting approach. A
significant majority of the evaluated weights are between 0.9 and 1.1 (see
Section~\ref{weighting}).

\section{Non-stationary Markov model for the analysis of missing outcomes}

\subsection{Structure of model}

\begin{figure}
\begin{centering}
\includegraphics[width= 0.35\textwidth]{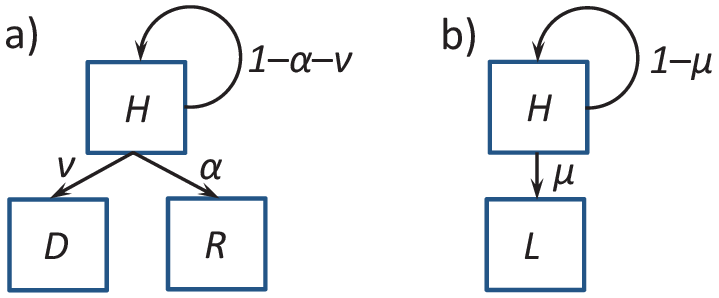}
\caption{a) The basic Markov model of mortality (`recovery/death lottery') with two
absorbing states {(states from which patients do not leave)}, `D' (death) and `R'
(recovery). b) The `lottery of transfer' (from the TARN network)  with one absorbing
state `L' (`left'). The transition probabilities $\alpha=\alpha(t,s)$, $\nu=\nu(t,s)$ and
$\mu(t,s)$ depend on the time after injury $t$ and on the state of the patient on the
first day after trauma presented by the values of attributes $s$.
\label{Fig:BasicMortalityModel}}
\end{centering}
\end{figure}

\begin{figure}
\begin{centering}
\includegraphics[width= 0.35\textwidth]{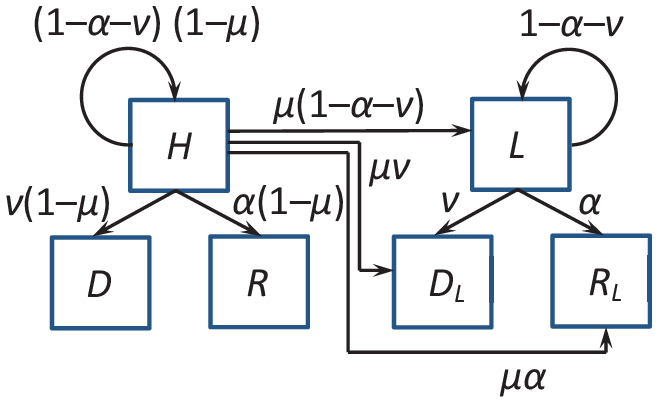}
\caption{The Markov model of mortality and transfer from TARN hospitals to hospitals
outside TARN for the limit case of `advanced transfer', when the lottery of transfer
(Fig.~\ref{Fig:BasicMortalityModel} b) occurs every day {\em before} the lottery of
survival (Fig.~\ref{Fig:BasicMortalityModel} a). It has six states: `H' (an alive patient
in a TARN hospital), `L' (an alive patient in a hospital outside TARN), `D' (death in a
TARN hospital), `${\rm D_L}$' (death in a hospital outside TARN), `R' (recovery of a
patient in  a TARN hospital) and `${\rm R_L}$' (recovery of a patient in a hospital
outside TARN). Four of them are absorbing: `D', `${\rm D_L}$', `R', and `${\rm R_L}$'.
The transitions from H to ${\rm D_L}$ and ${\rm R_L}$ are superpositions of the same day
transitions: ${\rm H \to L \to D_L}$ and $\rm H\to L \to
R_L$\label{Fig:MortalityModelBefore}}
\end{centering}
\end{figure}

\begin{figure}
\begin{centering}
\includegraphics[width= 0.35\textwidth]{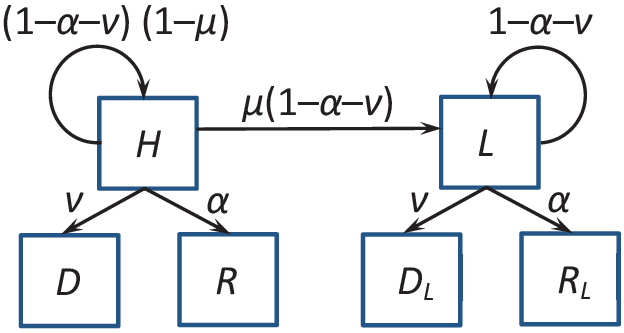}
\caption{The Markov model of mortality and transfer from TARN hospitals to hospitals
outside TARN for the limit case of `retarded transfer', when the lottery of transfer
(Fig.~\ref{Fig:BasicMortalityModel} b) occurs every day {\em after} the lottery of
survival (Fig.~\ref{Fig:BasicMortalityModel} a). \label{Fig:MainMarkovModelAfter} It has
the same states as the model with advanced transfer (Fig.~\ref{Fig:MortalityModelBefore})
but different transition probabilities.}
\end{centering}
\end{figure}

We propose a system of Markov models for evaluation of mortality in trauma datasets. In
these models each day each patient can participate  in two `lotteries'
(Fig.~\ref{Fig:BasicMortalityModel}). The first lottery (recovery/death),
Fig.~\ref{Fig:BasicMortalityModel} a, has three outcomes: `R' (recovery), `D' (death),
and `H' (remains in a TARN hospital). The second lottery (of transfer),
Fig.~\ref{Fig:BasicMortalityModel} b, has two outcomes: `H' (remains in a TARN hospital)
and `L' (transfer from the TARN hospital to a hospital or `other institution' outside
TARN). The probabilities of outcomes depend on the {\em time from the injury} $t$ and on
the {\em state of the patient} after injury $s$. It is important to stress that $s$ in our models characterizes the state of the patient
on the first day after trauma and may include severity, type of injury
(blunt/penetrating), localization of traumas, age, gender,  airway status, systolic and
diastolic blood pressure, etc, but cannot change in time.

The description of state $s$ may
vary in the level of {detail} depending {on} the available information. We have {fitted} and
tested two models based on the severity of trauma: the maximal severity model and the
(binned) NISS model. In Section~\ref{Sec:refine} we demostrate that 
it is necessary to refine the model and to include the age group in $s$ for low severities. 
For different purposes the mortality model can include more detail.

The lotteries (Fig.~\ref{Fig:BasicMortalityModel}) do not commute. We consider two limit
cases: `advanced transfer' (Fig.~\ref{Fig:MortalityModelBefore}) and `retarded transfer'
(Fig.~\ref{Fig:MainMarkovModelAfter}). In models with advanced transfer the lottery
of transfer Fig.~\ref{Fig:BasicMortalityModel} b) {each} day precedes the lottery of
recovery/death (Fig.~\ref{Fig:BasicMortalityModel} a). In models with retarded
transfer, conversely, the lottery of recovery/death {precedes the} lottery of transfer.

These two models are important because many  other much more general Markov models are
between them in the following exact sense. It is a very strong assumption that every day
there are two steps only: the recovery/death lottery and the transfer lottery. It may be
more realistic to assume that every day there are many `fractional steps' of
recovery/death and of transfer from TARN and the result of the day is the aggregate
result of all of these fractional steps. Assume {that the events} of recover, death and
transfer are sampled for every day after injury $t$ from {a number} $M$ consecutive
random choices with probabilities $\alpha_i,\nu_i$ for recovery/death and $\mu_i$ for
transfer out of TARN {($i=1,\ldots,M$), and} this chain of choices is Markovian (the
choices for a patient do not depend on the previous choices directly but only on the
current state, H, R or L). It is non-stationary because the transition probabilities depend 
on time. They are different for different days after injury.

This sequence of choices is displayed as a
sequence of fractional steps:
\begin{equation*}
\begin{split}
\mbox{recovery/death}_1&\to\mbox{transfer}_1\to\ldots \\ &\to
\mbox{recovery/death}_M\to\mbox{transfer}_M.
\end{split}
\end{equation*}

The probability of in-TARN death in the above model of sequential choice, {on} a
given day after trauma is
$$\nu_1+\nu_2 (1-\alpha_1-\nu_1)(1-\mu_1)+\ldots +\nu_M
\prod_{i=1}^{M-1}(1-\alpha_i-\nu_i)(1-\mu_i).$$

{Similarly, the probability for recovery is}
$$\alpha_1+\alpha_2 (1-\alpha_1-\nu_1)(1-\mu_1)+\ldots +\alpha_M
\prod_{i=1}^{M-1}(1-\alpha_i-\nu_i)(1-\mu_i).$$

{Finally, the probability of transfer to a hospital outside of TARN is}
\begin{equation*}
\begin{split}
\mu_1 (1-\alpha_1-\nu_1)&+\mu_2(1-\mu_1)(1-\alpha_1-\nu_1)(1-\alpha_2-\nu_2)\\
&+\ldots +\mu_M\prod_{i=1}^{M-1}(1-\mu_i) \prod_{j=1}^M(1-\alpha_j-\nu_j) {.}
\end{split}
\end{equation*}

The probabilities $\alpha_i,\nu_i$ for the fractional steps should be consistent with the
daily probabilities $\alpha, \nu$: if there is no transfer then the resulting
probabilities of {recovery or death} should be the same:

\begin{equation}\label{alphanuconditiom}
\begin{split}
&\alpha_1+\alpha_2 (1-\alpha_1-\nu_1)+\ldots +\alpha_M
\prod_{i=1}^{M-1}(1-\alpha_i-\nu_i)=\alpha , \\
&\nu_1+\nu_2 (1-\alpha_1-\nu_1)+\ldots +\nu_M \prod_{i=1}^{M-1}(1-\alpha_i-\nu_i)=\nu. \\
&\mbox{{Also,} } \prod_{i=1}^M(1-\alpha_i-\nu_i)=1-\alpha-\nu.
\end{split}
\end{equation}
Similarly, for $\mu_i$ we get the {conditions}
\begin{equation}
\begin{split}\label{mucondition}
\mu_1+\mu_2(1-\mu_1)+\ldots+\mu_M \prod_{i=1}^{M-1}(1-\mu_i)&=\mu \\
\mbox{ {and} } \prod_{i=1}^M(1-\mu_i)&=1-\mu {.}
\end{split}
\end{equation}

\begin{proposition}
The probability {of in-TARN} death in the described model of sequential choice for
every day after trauma is between {the} probabilities for the Markovian model with
advanced transfer (Fig.~\ref{Fig:MortalityModelBefore}) and the Markovian model with
retarded transfer (Fig.~\ref{Fig:MainMarkovModelAfter}):
\begin{equation}\label{BetweenIneq}
\begin{split}
\nu(1-\mu) \leq \nu_1&+\nu_2 (1-\alpha_1-\nu_1)(1-\mu_1)+\ldots \\&+\nu_M
\prod_{i=1}^{M-1}(1-\alpha_i-\nu_i)(1-\mu_i)\leq \nu .
\end{split}
\end{equation}
\end{proposition}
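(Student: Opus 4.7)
The plan is a direct elementary estimate: bound the middle expression above and below via term-by-term manipulation of the partial products $\prod_{i}(1-\mu_i)$, then collapse the resulting sums using the two consistency identities (\ref{alphanuconditiom}) and (\ref{mucondition}). Writing the middle quantity as
$$D = \sum_{k=1}^M \nu_k \prod_{i=1}^{k-1}(1-\alpha_i-\nu_i)(1-\mu_i),$$
with the empty product ($k=1$) understood to equal $1$, the task splits cleanly into an upper and a lower bound.

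For the upper bound $D\le\nu$, I would exploit the trivial inequality $(1-\mu_i)\le 1$, legitimate because each $\mu_i\in[0,1]$. Dropping every such factor from each summand only increases $D$, and what remains is exactly $\sum_{k=1}^M \nu_k \prod_{i=1}^{k-1}(1-\alpha_i-\nu_i)$, which collapses to $\nu$ by the second identity in (\ref{alphanuconditiom}). Probabilistically, this is the ``retarded transfer'' scenario: if the recovery/death lottery always fires before any transfer lottery, the in-TARN death probability equals $\nu$ exactly, consistent with Figure~\ref{Fig:MainMarkovModelAfter}.

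For the lower bound $\nu(1-\mu)\le D$, I would go the other way and use the monotonicity
$$\prod_{i=1}^{k-1}(1-\mu_i)\;\ge\;\prod_{i=1}^{M}(1-\mu_i)\;=\;1-\mu,$$
where the equality is the second identity in (\ref{mucondition}) and the inequality holds since appending further factors in $[0,1]$ to a product can only shrink it. This bound is uniform in $k$, so $(1-\mu)$ pulls out of the sum, leaving the same collapsing expression $\sum_{k=1}^M \nu_k \prod_{i=1}^{k-1}(1-\alpha_i-\nu_i)=\nu$; together this gives $D\ge(1-\mu)\nu$. This corresponds to the ``advanced transfer'' limit of Figure~\ref{Fig:MortalityModelBefore}, in which all transfer opportunities are cashed in before the death lottery and the death probability is correspondingly scaled by $1-\mu$.

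There is no genuine analytic obstacle — the whole argument rests on monotonicity of products of numbers in $[0,1]$ — so the only care required is bookkeeping: making sure the partial products are indexed consistently between $D$ and the targets $\nu$ and $\nu(1-\mu)$, and that the consistency identities (\ref{alphanuconditiom}) and (\ref{mucondition}) are invoked in exactly the form given (the fact that the $\alpha_i,\nu_i$ enter $D$ only through the combination $1-\alpha_i-\nu_i$ is crucial, since this is the quantity controlled by the consistency relations). The same technique bounds the sequential-model probabilities of in-TARN recovery and of out-of-TARN transfer between the corresponding advanced/retarded expressions, which could be recorded as an immediate corollary if a joint three-outcome version of the proposition is needed.
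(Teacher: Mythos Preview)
Your proof is correct and essentially identical to the paper's own argument: both bound the partial products $\prod_{i=1}^{k-1}(1-\mu_i)$ above by $1$ and below by $\prod_{i=1}^{M}(1-\mu_i)=1-\mu$, then collapse the resulting sums via the consistency identities (\ref{alphanuconditiom}) and (\ref{mucondition}). Your closing remark that the same trick handles the in-TARN recovery and out-of-TARN transfer probabilities is exactly what the paper records as Propositions~2 and~3.
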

\begin{proof}
According to conditions (\ref{alphanuconditiom}), (\ref{mucondition}),
\begin{equation}\label{Firsttrans}
\begin{split}
&\nu(1-\mu)\\&=\left[\nu_1+\nu_2 (1-\alpha_1-\nu_1)+\ldots +\nu_M
\prod_{i=1}^{M-1}(1-\alpha_i-\nu_i)\right] \\ & \quad \times \prod_{i=1}^M(1-\mu_i).
\end{split}
\end{equation}
Notice that for every $j$ ($1\leq j \leq M$),
$$ \prod_{i=1}^M(1-\mu_i) \leq \prod_{i=1}^j (1-\mu_i) $$
because $0\leq 1-\mu_i \leq 1$ for all probabilities $\mu_i$.
Therefore,
$$\nu_j\prod_{i=1}^j (1-\alpha_i-\nu_i)\prod_{k=1}^M (1-\mu_k) \leq \nu_j\prod_{i=1}^j (1-\alpha_i-\nu_i)(1-\mu_i)$$
and the following inequality holds
\begin{equation}
\begin{split}
&\left[\nu_1+\nu_2 (1-\alpha_1-\nu_1)+\ldots +\nu_M
\prod_{i=1}^{M-1}(1-\alpha_i-\nu_i)\right] \\ & \times \prod_{i=1}^M(1-\mu_i) \\ & \leq
\nu_1+\nu_2 (1-\alpha_1-\nu_1)+\ldots +\nu_M \prod_{i=1}^{M-1}(1-\alpha_i-\nu_i).
\end{split}
\end{equation}
The left  inequality in (\ref{BetweenIneq}) is proven. The right inequality in
(\ref{BetweenIneq}) follows from condition (\ref{alphanuconditiom}) because for every
{product}
$$\nu_j\prod_{i=1}^j (1-\alpha_i-\nu_i)(1-\mu_i) \leq \nu_j\prod_{i=1}^j (1-\alpha_i-\nu_i) {.}$$

\end{proof}

The {proofs} of the following propositions {are} very similar

\begin{proposition}
The probability {of in-TARN} recovery in the described model of sequential choice for
every day after trauma is between {the} probabilities for the Markovian model with
advanced transfer (Fig.~\ref{Fig:MortalityModelBefore}) and the Markovian model with
retarded transfer (Fig.~\ref{Fig:MainMarkovModelAfter}):
\begin{equation}\label{BetweenIneqalpha}
\begin{split}
\alpha(1-\mu) \leq \alpha_1&+\alpha_2 (1-\alpha_1-\nu_1)(1-\mu_1)+\ldots \\&+\alpha_M
\prod_{i=1}^{M-1}(1-\alpha_i-\nu_i)(1-\mu_i)\leq \alpha  {.}
\end{split}
\end{equation}
\end{proposition}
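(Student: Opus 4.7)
The plan is to mirror the proof of Proposition 1 essentially verbatim, replacing $\nu$ and $\nu_i$ by $\alpha$ and $\alpha_i$ throughout. Structurally, $\alpha_i$ and $\nu_i$ play symmetric roles as outgoing transition probabilities from state H in the recovery/death lottery of Fig.~\ref{Fig:BasicMortalityModel}a, and the first two identities of (\ref{alphanuconditiom}) are formally identical in shape: one expresses $\alpha$ as the same telescoping sum of the $\alpha_j\prod_{i<j}(1-\alpha_i-\nu_i)$ that the other expresses $\nu$ in terms of $\nu_j\prod_{i<j}(1-\alpha_i-\nu_i)$. This symmetry lets every algebraic step of the previous argument carry over without modification.

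Concretely, I would first apply the $\alpha$-identity in (\ref{alphanuconditiom}) together with the product identity in (\ref{mucondition}) to write
$$\alpha(1-\mu) = \left[\alpha_1+\alpha_2(1-\alpha_1-\nu_1)+\ldots+\alpha_M\prod_{i=1}^{M-1}(1-\alpha_i-\nu_i)\right]\prod_{i=1}^M(1-\mu_i),$$
the direct analog of (\ref{Firsttrans}). For the left inequality in (\ref{BetweenIneqalpha}), I would then use $\prod_{k=1}^M(1-\mu_k) \leq \prod_{i=1}^{j-1}(1-\mu_i)$, valid because each $1-\mu_i$ lies in $[0,1]$, to dominate each summand $\alpha_j\prod_{i=1}^{j-1}(1-\alpha_i-\nu_i)\prod_{k=1}^M(1-\mu_k)$ in the factorization above by the corresponding summand $\alpha_j\prod_{i=1}^{j-1}(1-\alpha_i-\nu_i)(1-\mu_i)$ of the sequential-choice recovery probability, and then sum over $j$. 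For the right inequality, I would use $(1-\mu_i)\leq 1$ term-by-term to drop all transfer factors from the sequential-choice sum, reducing it to $\alpha_1+\alpha_2(1-\alpha_1-\nu_1)+\ldots+\alpha_M\prod_{i=1}^{M-1}(1-\alpha_i-\nu_i)$, which equals $\alpha$ by (\ref{alphanuconditiom}).

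I do not anticipate any genuine obstacle here: the paper itself flags that the proofs of the following propositions are very similar, and indeed the only point to verify is that the $\alpha$-identity and $\nu$-identity in (\ref{alphanuconditiom}) have identical combinatorial structure, which is manifest from Fig.~\ref{Fig:BasicMortalityModel}a. The argument is thus a direct transcription of the proof of Proposition 1 with $\alpha$ substituted for $\nu$, and both bounds follow from the same two elementary facts used before, namely $1-\mu_i\in[0,1]$ and the consistency conditions (\ref{alphanuconditiom})--(\ref{mucondition}).
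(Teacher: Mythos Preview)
Your proposal is correct and is exactly the argument the paper has in mind: it omits the proof, remarking only that ``the proofs of the following propositions are very similar,'' and your transcription of the Proposition~1 argument with $\alpha,\alpha_i$ in place of $\nu,\nu_i$ is precisely that similar proof. The symmetry you identify between the $\alpha$- and $\nu$-identities in (\ref{alphanuconditiom}) is the only point that needs checking, and you have it right.
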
\begin{flushright}$\square$\end{flushright}

\begin{proposition}
The probability  {of transfer} outside TARN in the described model of sequential choice
for every day after trauma is between  {the} probabilities for the Markovian model with
advanced transfer (Fig.~\ref{Fig:MortalityModelBefore}) and the Markovian model with
retarded transfer (Fig.~\ref{Fig:MainMarkovModelAfter}):
\begin{equation}\label{BetweenIneqmu}
\mu (1-\alpha-\nu) \leq \sum_{j=1}^M \mu_j (1-\alpha_j-\nu_j)
\prod_{i=1}^{j-1}(1-\alpha_i-\nu_i)(1-\mu_i)\leq \mu {.}
\end{equation}
\end{proposition}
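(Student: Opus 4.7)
The plan is to prove both inequalities by term-by-term comparison in the spirit of the proof already given for Proposition 1. First I rewrite the middle sum in the more transparent form
\[
\sum_{j=1}^M \mu_j (1-\alpha_j-\nu_j) \prod_{i=1}^{j-1}(1-\alpha_i-\nu_i)(1-\mu_i)
= \sum_{j=1}^M \mu_j \left[\prod_{i=1}^{j}(1-\alpha_i-\nu_i)\right] \left[\prod_{i=1}^{j-1}(1-\mu_i)\right],
\]
which identifies each summand as the probability that the patient survives (no recovery, no death) the first $j$ recovery/death sub-lotteries, is not transferred in the first $j-1$ transfer sub-lotteries, and is then transferred at the $j$-th transfer sub-lottery.

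For the right-hand inequality I would invoke the consistency condition (\ref{mucondition}) to expand
\[
\mu = \sum_{j=1}^M \mu_j \prod_{i=1}^{j-1}(1-\mu_i),
\]
and compare the two sums term by term. Since every $\alpha_i+\nu_i\in[0,1]$, the extra factor $\prod_{i=1}^{j}(1-\alpha_i-\nu_i)$ appearing in the middle expression lies in $[0,1]$, so each summand is dominated by the corresponding summand of $\mu$, and the upper bound follows.

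For the left-hand inequality I would use both (\ref{alphanuconditiom}) and (\ref{mucondition}) to write
\[
\mu(1-\alpha-\nu) = \left[\sum_{j=1}^M \mu_j \prod_{i=1}^{j-1}(1-\mu_i)\right] \prod_{i=1}^{M}(1-\alpha_i-\nu_i),
\]
and again compare term by term with the middle sum. For each $j$ the comparison reduces to
\[
\mu_j \prod_{i=1}^{j-1}(1-\mu_i) \prod_{i=1}^{M}(1-\alpha_i-\nu_i)
\;\leq\; \mu_j \prod_{i=1}^{j-1}(1-\mu_i) \prod_{i=1}^{j}(1-\alpha_i-\nu_i),
\]
which is equivalent to $\prod_{i=j+1}^M(1-\alpha_i-\nu_i)\leq 1$, true because each factor is in $[0,1]$.

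I do not anticipate any real obstacle: the proposition is a direct analogue of the previous two, and both bounds reduce to the elementary monotonicity fact that multiplying or removing factors in $[0,1]$ decreases or increases a product respectively. The only bookkeeping point to be careful about is checking that the rewriting of the middle sum matches the transfer-probability expression introduced before the statement, so that the consistency relations (\ref{alphanuconditiom}) and (\ref{mucondition}) can be applied cleanly.
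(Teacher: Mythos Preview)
Your argument is correct and is exactly the analogue of the paper's proof of Proposition~1; the paper itself gives no separate proof for this proposition, merely indicating that it is ``very similar''. Your term-by-term comparison, together with the rewriting of the middle sum so that the factor $(1-\alpha_j-\nu_j)$ is absorbed into the product $\prod_{i=1}^{j}(1-\alpha_i-\nu_i)$, is precisely what the paper has in mind.
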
\begin{flushright}$\square$\end{flushright}

\subsection{Transition probabilities and their evaluation}

In the  {above} models (Figs.~\ref{Fig:MortalityModelBefore}  {and}
\ref{Fig:MainMarkovModelAfter}),  death and recovery of the transferred patients have
the same probabilities  {as} for the patients of TARN hospitals. These probabilities are
defined by the state of the patient $s$  and  by the time after injury. Of course, in
reality there is often a hope that the transfer will improve the situation and the
probability of death will decrease for the same state of the patient. Nevertheless, in
this paper we will neglect the changes of probabilities after transfer (just because
we have no {\em sufficient reason} for such a change). Of course, these models could be
extended to include the changes of mortality for transferred patients, if necessary.

Another question is the definition of $s$. Which attributes should be included in the `state' for the models
(Figs.~\ref{Fig:MortalityModelBefore}, \ref{Fig:MainMarkovModelAfter})? To motivate this choice, we should take into account
two considerations:
\begin{enumerate}
\item The models will be used to  {analyse data} with unknown outcomes. Trauma cases
    with missed outcomes make up 10-12\% of the dataset. Therefore,
    {an} error  {of} 10\% {in}
    mortality for data with unknown outcomes will cause  {an} error
    {of} $\sim$1\% {in} mortality for the whole
    dataset and it is possible to use relatively coarse models  {(see
    below)}.
\item  The description of the state $s$ should include attributes whose values are known for
    {a significant} majority of cases. This is especially important because
    {for cases} with unknown outcomes many of {the}
    attributes are often also unknown ({a} more detailed analysis of
    data with missed attributes is presented in the next {section}).
\end{enumerate}

Formally, there are many possibilities for defining $s$. It {could} include the initial
state after trauma (characteristics of injury and coma status, for example), age, gender,
the current state ($t$ days after trauma), fragments of history, etc.  {The set of the auxiliary variables which may be selected  as potential sources of information could be much 
larger. For example, for creation of the model for imputing missing physiological data in the National Trauma Data Bank (NTDB), USA,  the following variables were used: gender, age,  components of Glasgow Coma Status, the maximum AIS or ICISS (and, separately,   the maximum AIS or ICISS for head injuries), injury type (penetrating, blunt), prehospital intubation, duration of mechanical ventilation, tests for alcohol and drugs, etc. \cite{Moore2009}. Nevertheless, even the simple models identified in our paper solve the problem of mortality correction quite well. The extention of the set of variables will not include essential methodological novelty and may be performed easily for sufficiently large datasets.} For our purposes,
we select, identify and compare three coarse models:
 {\begin{model}[The coarsest model] $s=\emptyset$. \label{mod1} \end{model}
\begin{model}[The maximal severity model] $s=$the maximal severity score (an integer from 1 to 6).\label{mod2} \end{model}
\begin{model}[The binned NISS model ]We use seven bins: NISS=1-3, 4-8, 9, 10-16, 17-24, 25-35, 36+;
    $s$  is the bin number (7 values). The bins for $s=2,\ldots,  7$ have approximately equal depth whereas the
    bin with $s=1$ (NISS=1-3) is much smaller. (For this first bin we found that the model should be supplemented by age.)\label{mod3} \end{model}}

We observe that the cases with maximal severity 1 (or NISS=1-3, which is the same)
are very special. First of all, the age distributions in this group for the `Available
W30D' and the `OUT30' subgroups are very different (Fig.~\ref{Fig:LowSeverityAges}). If
we do not take into account this difference then we overestimate mortality in this
group. The necessary refinement of the model with isolation of elderly patients with low
severity of trauma is presented in Section~\ref{Sec:refine}.
\begin{figure}
\begin{centering}
\includegraphics[width= 0.45\textwidth]{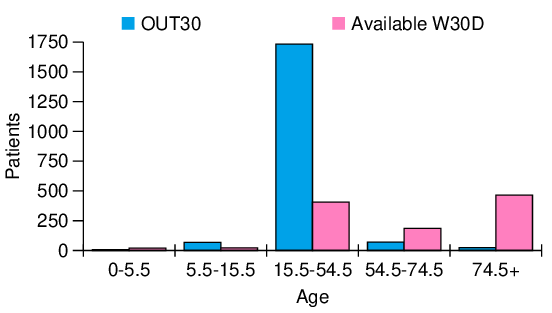}
\caption{ \label{Fig:LowSeverityAges} Age distributions for two groups of low severity
cases (NISS bin 1-3). The age distribution  for the low severity patients in TARN
(`Available W30D' AND NISS=1-3) for age binned in five bins (0-5.5; 5.5-15.5; 15.5-54.5;
54.5-74.5;  $>$74.5) has clear maximum for elderly patients (age $>$74.5), whereas the
absolute majority of the the low severity patients which left TARN without registered
outcome (`OUT30' AND NISS=1-3) belong to the group with age 15.5-54.5. }
\end{centering}
\end{figure}

Our approach may be combined with any stochastic model for early outcome prediction (see,
for example, \cite{Lavoie2004,Shoemaker2006,Brockamp2013}).

For the finite set of $s$ values, evaluation of all the coefficients $\alpha(t,s)$,
$\nu(t,s)$, and $\mu(t,s)$ is a particular case of a standard statistical problem of {\em
proportion estimate} for each given value of $s$; we use the Wilson score interval (CI)
\cite{Wilson1927}:
\begin{equation}\label{WCI}
\frac{1}{1+\frac{z^2}{n}}\left[\hat{p}+\frac{z^2}{2n}\pm z\sqrt{\frac{\hat{p}(1-\hat{p})}{n}+\frac{z^2}{4n^2}}\right],
\end{equation}
where $\hat{p}$ is the coefficient estimate, $z$ is the error percentile ($z=1.96$ for
{the} 95\% confidence interval), and $n$ is the number of degrees of freedom (for a
dataset without weights this is just the sample size).

For the coarsest model (Model~\ref{mod1}) the fraction of {patients transferred outside TARN is} {11.65\%.
This} is just the fraction of {patients transferred (within 30 days after injury) in}
Table~\ref{Table:2}. The 95\% CI (\ref{WCI}) for this fraction is 11.5--11.8\%. For the
maximal severity (Table~\ref{Table:3}) (Model~\ref{mod2}) and the binned NISS (Table~\ref{Table:4}) ((Model~\ref{mod3}) models
the fraction of {patients transferred outside TARN depends} on $s$ (bins) and the CI in
each {bin is} larger than for the total fraction in the coarsest models (Model~\ref{mod1}). Nevertheless,
{the} CIs for different bins in {these} models do not intersect  (the only exclusion is
the CI for the smallest bin, maximal severity 6, in the maximal severity model (Model~\ref{mod2}),
Table~\ref{Table:3}). In particular, this means that the probability {of} transfer
outside TARN hospitals {depends strongly} on the trauma severity.

\begin{table*}{\footnotesize\begin{center}
\caption{\label{Table:3} Sizes of bins and fractions of transfer out of TARN (within 30
days after injury) for the maximal severity models.}
\begin{tabular}{ccccc}
\hline Max severity & OUT30 & Total & Fraction of OUT30 & 95\% CI \\ \hline 1 & 1,905 &
3,005 & 63.39\% & 61.66--65.10\% \\ \hline 2 & 3,094 & 35,109 & 8.81\% & 8.52--9.11\% \\
\hline 3 & 6,203 & 77,518 & 8.00\% & 7.81--8.20\% \\ \hline 4 & 4,535 & 29,603 & 15.32\%
& 14.91--15.73\% \\ \hline 5 & 3,542 & 20,175 & 17.56\% & 17.04--18.09\% \\ \hline 6 & 10
& 149 & 6.71\% & 3.88--11.72\% \\ \hline
\end{tabular}
\end{center}}
\end{table*}

\begin{table*}{\footnotesize\begin{center}
\caption{\label{Table:4} Sizes of bins and fractions of patients  transferred to a
hospital or institution (or unknown destination) (within 30 days after injury) for the
binned NISS Models~\ref{mod3} .}
\begin{tabular}{ccccc}
\hline NISS bin & OUT30 & Total & Fraction of OUT30 & 95\% CI \\ \hline 1-3 & 1,905 &
3,005 & 63.39\% & 61.66--65.10\% \\ \hline 4-8 & 2,078 & 24,982 & 8.32\% & 7.98--8.67\%
\\ \hline 9 & 2,159 & 36,722 & 5.88\% & 5.64--6.12\% \\ \hline 10-16 & 2,710 & 29,237 &
9.27\% & 8.94--9.61\% \\ \hline 17-24 & 2,882 & 25,074 & 11.49\% &11.11--11.89\% \\
\hline 25-35 & 3,603 & 23,557 & 15.29\% & 14.84--15.76\% \\ \hline 36+ & 3,952 & 22,982 &
17.20\% & 16.71--17.69\% \\ \hline
\end{tabular}
\end{center}}
\end{table*}

For each value of $s$ and time after injury $t$ the following quantities are found for
the {analysed} dataset:
\begin{itemize}
\item $H(t,s)$ -- the number of patients in state $s$ registered as alive in a TARN
    hospital at any time during day $t$ after injury (in this number we include the
     {patients which} have stayed at a TARN hospital  {during day} $t$ after injury,
    the patients who have died  {on} this day in a TARN hospital, have been  {discharged, or} have been transferred outside TARN  {on} this day);
\item $\Delta D(t,s)$ -- the number of patients in state $s$ {who} died in TARN hospitals
    {on
    day} $t$ after injury;
\item $\Delta R(t,s)$ -- the number of patients in state $s$  {who} recovered (discharged
    to one
    of the final recovery destinations) in TARN hospitals {on day} $t$ after
    injury;
\item $\Delta L(t,s)$ -- the number of patients  in state $s$ {who} transferred out of TARN
    hospitals to other hospitals, institutions or unknown destinations  {on day} $t$ after
    injury.
\end{itemize}
Just for control, the following identity should hold: $H(t+1,s)=H(t,s)-\Delta
D(t,s)-\Delta R(t,s)-\Delta L(t,s)$ because state $s$ in our models does not change in
time.

For the model with advanced transfer from TARN hospitals the coefficients are defined
following the scheme presented in Fig.~\ref{Fig:MortalityModelBefore}:
\begin{equation}\label{coeffbefore}
\begin{split}
&\mu(t,s)=\frac{\Delta L(t,s)}{H(t,s)}; \; \nu(t,s)=\frac{\Delta
D(t,s)}{(1-\mu(t,s))H(t,s)}; \; \\ &\alpha(t,s)=\frac{\Delta R(t,s)}{(1-\mu(t,s))H(t,s)}.
\end{split}
\end{equation}

For the model with retarded transfer from TARN hospitals the coefficients are defined
following the scheme presented in Fig.~\ref{Fig:MainMarkovModelAfter}:
\begin{equation}\label{coeffafter}
\begin{split}
&\nu(t,s)=\frac{\Delta D(t,s)}{H(t,s)}; \; \alpha(t,s)=\frac{\Delta R(t,s)}{H(t,s)}; \;
\\ &\mu(t,s)=\frac{\Delta L(t,s)}{(1-\alpha(t,s)-\nu(t,s))H(t,s)}.
\end{split}
\end{equation}

\subsection{Evaluation of FOD}

Each model provides us {with} the {\em corrected FOD}. We use the basic assumption that
{the} probability {of dying} at  time $t$ after injury depends on $s$ but is the same
inside and outside TARN. For each $t$ and $s$ we define the {\em specific cumulative FOD}
(scFOD$(t,s)$) as the fraction of patients with state $s$ {who} died during the time
interval $[1,t]$:
\begin{equation}\label{scFOD}
\begin{split}
\mbox{scFOD}(t,s)=\nu(1,s)&+\nu(2,s)(1-\alpha(1,s)-\nu(1,s))+\ldots
\\&+\nu(t,s)\prod_{i=1}^{t-1}(1-\alpha(i,s)-\nu(i,s)).
\end{split}
\end{equation}

The cumulative FOD at time $t$ (cFOD$(t)$) for the whole model (for all $s$ together) is
\begin{equation}\label{cFOD}
\mbox{cFOD}(t)=\frac{\sum_s \mbox{scFOD}(t,s) H(1,s)}{H_0} {,}
\end{equation}
where $H_0=\sum_s H(1,s)$ is the total number of patients in our dataset (in our case
study, $H_0=165,559$).

The functions cFOD$(t)$ and {scFOD$(t,s)$ for all $s$}, grow monotonically with $t$.

If we define the final outcome as survival or death {within} 30 days after injury then
the target value is FOD= cFOD(30).

Let us compare two following na\"{\i}ve approaches to {the} handling of missing outcomes with the
{Markov models we have created.}
\begin{itemize}
\item {\em Available case analysis.} Just delete all of the 19,289 cases with the
    outcome     `Transferred OUT OF TARN within 30 days after injury' from the dataset. In the remaining cases all
    outcomes are known and the FOD is the ratio $\frac{\mbox{Dead (W30D)}}{\mbox{Total}}$ in the reduced dataset.
\item {\em Consider all transferred patients as alive.} In this case, the total number of
    patients does not change and the FOD is the  ratio $\frac{\mbox{Dead (W30D)}}{\mbox{Total}}$,
     where the number `Dead (W30D)' is the same but the number `Total' is calculated for the whole original
    dataset (Table~\ref{Table:2}).
\end{itemize}

\begin{remark} If {we} apply available case analysis then {none of the numbers $\Delta D(t,s)$ and $\Delta R(t,s)$
change} but the numbers $H(t,s)$ of the patients {in TARN} will decrease for all $t$ and $s$
(or do not change if there is nothing to delete).  {The corresponding mortality
coefficients {$\nu(t,s)$ } will be larger than the coefficients (\ref{coeffbefore}),}
(\ref{coeffafter}) for all the Markov models considered before. This means that the MCAR
(Missing Completely At Random) approach to missed outcomes  always overestimates
mortality, while the second na\"ive approach (`Consider all transferred patients as
alive') always underestimates mortality.
\end{remark}

We {have} created six Markov models for mortality of transferred patients. They differ by
the state variable $s$ (the coarsest model without $s$, Model~\ref{mod1}, the maximal severity model with six states, Model~\ref{mod2} and the binned NISS model with seven states, Model~\ref{mod3}) and by the order of the
`recovery/death' and `transfer' lotteries (Fig.~\ref{Fig:BasicMortalityModel}). In
Table~\ref{Table:5} we compare the {mortality evaluated} by these {models, and} by the two
na\"ive models. We can see that the difference between all {of} our Markov models is not
{significant; we} cannot reject the hypothesis that they coincide with {any} one of them
($p$-value is between 0.20 and  { 0.56}). Both {of} the na\"ive models differ significantly
from all {of} the six Markov models. The difference between the na\"ive models is also
significant. All the values of mortality predicted by the Markov models belong to the interval
$(6.77\%,6.91\%)$. The average of {the six} Markovian predictions is 6.84\%. None of the
Markov model predictions differ significantly from this average. Both {of} the na\"ive
predictions are significantly different.

\begin{table*}{\footnotesize\begin{center}
\caption{\label{Table:5}FOD for different models.  { Here, the $p$-value is the probability of  observing `by chance' equal or greater deviation of FOD from the value FOD=6.85\% given by the coarsest advanced model', under the condition that the expectations of  FOD is 6.85\%.} }
\begin{tabular}{ccccc}
\hline Model & Alive & Dead & FOD & $p$-value \\ \hline \hline
 Available case study & 135,733 & 10,537 & 7.20\% & $1.3\times10^{-8}$ \\
 \hline All transferred are alive & 155,022 & 10,537 & 6.36\% & $5.0\times10^{-15}$ \\ \hline
\hline Coarsest advanced & 154,217 & 11,342 & 6.85\% & 1.00 \\
\hline Coarsest retarded & 154,350 & 11,209 & 6.77\% & 0.20\\
 \hline Max severity, advanced& 154,120 & 11,439 & 6.91\% & 0.34\\
 \hline Max severity, retarded & 154,266 & 11,293 & 6.82\% & 0.41 \\ \hline
NISS binned, advanced & 154,145 & 11,414 &  6.89\% & 0.48 \\ \hline
 NISS binned, retarded & 154,292 & 11,267 & 6.81\% & 0.57 \\ \hline
\end{tabular}
\end{center}}
\end{table*}

\subsection{Validation of the models on the excluded trauma cases: patients  transferred to TARN (`IN30')}

For each type of model the coefficients $\mu$, $\alpha$ and $\nu$ are evaluated using
the  dataset of 165,559 patients entering TARN in the first day of injury
(Fig~\ref{MainGroups}, Main Group). Let us test the models with evaluated coefficients we have described here 
on data we have not used before.
These data consist of the 16,693 cases who came to TARN hospitals more than one day after injury, 
which we deleted from the original set before modelling.  
This is a special and biased sample, `IN30' (see
Fig.~\ref{MainGroups}). We now apply the
models developed and identified in the previous {subsections} to {analyse this} sample.
We expect that there should be some similarity between the groups of patients transferred
{\em from TARN} (`OUT30') and the patients transferred {\em to TARN} (`IN30')
(Fig.~\ref{MainGroups}).  We do not expect quantitative coincidence of the results for
the groups `OUT30' and `IN30' because there is no precise symmetry between the patients
moved to TARN and the patients moved from TARN. The hospitals in TARN are those with a
special interest in trauma - in particular the large major trauma centers, so the
transfers in (mainly for acute specialist care) will not be the same as those transferred
out (mainly for complex rehabilitation, or special geriatric care, etc.).

Therefore, the estimated behavior of the mortality {of} the group transferred from TARN can be {\it qualitatively} validated
using the observed mortality in the group {who} moved to TARN.

We {consider survival during the first 30 days. Hence} we have to use the records which
correspond to this period only. There are 15,437 such records among the 16,693 in 
`IN30'.

In  {these} estimates of the FOD we explicitly use the empirical fluxes into and from
TARN hospitals. For each $t,s$ we have the following quantities:
\begin{itemize}
\item $L_{\rm in}(t,s)$ -- the number of patients in state $s$ which came to TARN {on
    day  $t$ after} injury;
\item $L_{\rm out} (t,s)$ -- the number of patients in state $s$ from `IN30' which were
    transferred from TARN on day $t$ after injury.
\item $h_{\rm IN30}(t,s)$ -- the number of patients in IN30 in state $s$ on day $t$
    after injury.
\item $D_{\rm IN30}(t,s)$ -- the number of deaths in TARN {of} the patients from IN30 in
    state
    $s$ {by}  day  $t$  after injury (cumulative).
\item $R_{\rm IN30}(t,s)$ -- the number of patients in `IN30' in state $s$ {who recovered
    by    day} $t$ after injury (cumulative).
\end{itemize}
We use the values $L_{\rm in}(t,s)$ and $L_{\rm out} (t,s)$ from the database, evaluate
$h_{\rm IN30}(t,s)$, $D_{\rm IN30}(t,s)$, and $R_{\rm IN30}(t,s)$ {for} every model and
then compare the resulting outcomes  (evaluated numbers of death in TARN of the patients
from `IN30' within 30 days of injury, $\sum_s D_{\rm IN30}(30,s)$) to empirical data
from TARN records.

For each model with advanced transfer the variables $h_{\rm IN30}(t,s)$, $D_{\rm
IN30}(t,s)$, and $R_{\rm IN30}(t,s)$ are evaluated by {recurrence} formulas:
\begin{equation}
\begin{split}
h_{\rm IN30}(t+1,s)&=[h_{\rm IN30}(t,s)+L_{\rm in}(t+1,s) \\-&L_{\rm out}
(t+1,s)][1-\alpha(t+1,s)-\nu(t+1,s)]; \\
R_{\rm IN30}(t+1,s)&=R_{\rm IN30}(t,s)+\alpha(t+1,s)\\ \times [&h_{\rm IN30}(t,s)+L_{\rm
in}(t+1,s)-L_{\rm out} (t+1,s)]; \\
D_{\rm IN30}(t+1,s)&=D_{\rm IN30}(t,s)+\nu(t+1,s) \\ \times [&h_{\rm IN30}(t,s)+L_{\rm
in}(t+1,s)-L_{\rm out} (t+1,s)] {,}
\end{split}
\end{equation}
with initial condition $$h_{\rm IN30}(0,s)=R_{\rm IN30}(0,s)=D_{\rm IN30}(0,s)=0.$$

For each model with retarded transfer the variables $h_{\rm IN30}(t,s)$, $D_{\rm
IN30}(t,s)$, and $R_{\rm IN30}(t,s)$ are evaluated by {recurrence} formulas:
\begin{equation}
\begin{split}
h_{\rm IN30}(t+1,s)=&h_{\rm IN30}(t,s)[1-\alpha(t+1,s)-\nu(t+1,s)]\\ &+L_{\rm
in}(t+1,s)-L_{\rm out} (t+1,s); \\
R_{\rm IN30}(t+1,s)=&R_{\rm IN30}(t,s)+\alpha(t+1,s)h_{\rm IN30}(t,s); \\
D_{\rm IN30}(t+1,s)=&D_{\rm IN30}(t,s)+\nu(t+1,s)h_{\rm IN30}(t,s) {,}
\end{split}
\end{equation}
with initial condition $$h_{\rm IN30}(0,s)=R_{\rm IN30}(0,s)=D_{\rm IN30}(0,s)=0.$$

For each model, the coefficients $\alpha(t,s)$ and $\nu(t,s)$ are evaluated using the
previously {analysed} dataset (without IN30) by formulas (\ref{coeffbefore}) {and}
(\ref{coeffafter}). The results are presented in Table~\ref{Table:6}.

\begin{table*}{\footnotesize\begin{center}
\caption{\label{Table:6}Comparison of the models with the empirical data about patients
from `IN30'.}
\begin{tabular}{cccccc}
\hline Model & Alive & Dead & Total & FOD & CI 95 \\ \hline
 Empirical data & 13,038.00 & 417.00 & 13,455.00 & 3.10\% & 2.82--3.41\% \\ \hline
 \hline Coarsest advanced & 12,834.55 & 620.45 & 13,455.00 & 4.61\% & 4.27--4.98\% \\
\hline Coarsest retarded & 12,933.67 & 521.33 & 13,455.00 & 3.87\% & 3.56--4.21\%\\
 \hline Max severity, advanced & 12,824.90 & 630.10 & 13,455.00 & 4.68\% & 4.34--5.05\%\\
 \hline Max severity, retarded & 12,920.71 & 534.29 & 13,455.00 & 3.97\% & 3.65--4.31\% \\ \hline
NISS binned, advanced &  12,885.93 & 569.07 & 13,455.00 & 4.23\% & 3.90--4.58\% \\ \hline
 NISS binned, retarded & 12,971.22 & 483.78 & 13,455.00 & 3.60\% & 3.29--3.92\% \\ \hline
\end{tabular}
\end{center}}
\end{table*}

We can see that all the models overestimate mortality in `IN30'. The available case
analysis demonstrates the worst performance (the relative error exceeds 100\% of
empirical mortality). Models with retarded transfer {perform better in this test}
than the models with advanced transfer. The NISS binned model with retarded transfer is
the best (the relative error in prediction of FOD is 16\% of  the empirical data and, at
least, the 95\% confidence intervals for the result of this model and for the empirical
data intersect). There exist further possibilities for improving the models presented but already the
relative error of $16\%$ for `IN30' in the estimation for the total database will give the
input in the relative error in the FOD $\lesssim$1\% (or absolute error
$\lesssim$0.07\%). That is much better than the errors of the available case evaluations
or of the approach `all are alive' to the evaluation of mortality of transferred
patients.

 {\subsection{Validation of the model for the mortality prediction in the `Available W30D' group of TARN patients on  `real death -- simulated transfer' data}}

 {The successful test on the group  `IN30' of patients transferred to TARN supports the approach developed in this work. Nevertheless, transfer {\it to} TARN hospitals differs from transfer {\it from} TARN qualitatively because of a qualitative difference between hospitals included and not included in TARN. In this section, we provide  additional validation of the Markov models on the mortality prediction in the `Available W30D' group of TARN patients with known outcomes (Fig.~\ref{FirstDayTransfer}). We created a statistical model for imitation of patient transfer and use the known outcomes. This means, we use `real death -- simulated transfer' data.}

 {\begin{itemize}
\item Firstly, using the main group, we evaluated the transfer probability for each day in hospital as a function of NISS for 7 NISS bins,  separately for age$<65$ and age$\geq 65$. For example, a histogram of the number of transferred patients for the first day after trauma is presented in Fig.~\ref{FirstDayTransfer}.
\item Secondly, we take the `Available W30D' group and separate it into the `training set' and `test set'.  Random selection of the patients for the test set models transfer from TARN using probabilities evaluated at the previous step utilising the real data.
\item Thirdly, we create a Markov chain model using the training set and test the mortality in the   whole `Available W30D' group, which was not given during the modelling.
\end{itemize}}

\begin{figure*}
\begin{centering}
\includegraphics[width= 0.9\textwidth]{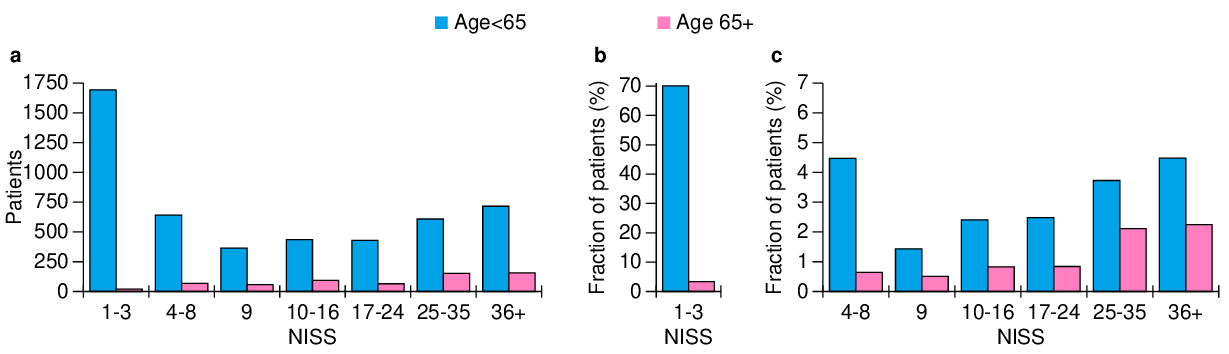}
 \caption{ \label{FirstDayTransfer}  Patients transferred from TARN during the first day after trauma (a) and the fraction of these patients (b,c) for seven NISS bins and two age groups (age$<65$ and age$\geq 65$). Note the scale difference between the fraction histogram for NISS=1-3 (b)  and for the other six bins (c).}
\end{centering}
\end{figure*}

\begin{table*}{\footnotesize\begin{center}
\caption{\label{Table:SimTransferTest}Test of the Markov `NISS binned, retarded' model on the `real death -- simulated transfer' data from `Available W30D' group}
\begin{tabular}{ccccc}
\hline 
\multirow{2}{*}
  & Fraction of & FOD (All & FOD (Available & FOD (Markov \\
  & transferred & alive)    & case study)      & model) \\ \hline
 Min & 9.84 & 6.83 & 7.59 & 7.18  \\ 
 \hline Max & 10.12 &  6.90 & 7.67 & 7.24\\
\hline Mean & 10.00 &6.86 & 7.63 & 7.21\\
 \hline St. deviation & 0.00628& 0.0140 & 0.0153 & 0.0147\\
 \hline Width  of 95\% CI & 0.0123 & 0.0274 & 0.0299 & 0.0285 \\ \hline
\end{tabular}
\end{center}}
\end{table*}

 {The random separation of the `Available W30D' group into training and test sets was performed 100 times. We evaluated the mortality for each such separation by two na\"ive models (available case study and `all transferred alive’ assumption) and the Markov `NISS binned, retarded’. The results were compared in Table~\ref{Table:SimTransferTest}. The fraction of death in the whole `Available W30D' group is 7.2\%. `Available case study' overestimates mortality (all the mortality values given by this approach in 100 trials are in the interval [7.59\%,7.67\%], which does not even  include the true value 7.2\%), the `all transferred are alive' hypothesis underestimates mortality (all the mortality values given by this approach in 100 trials are in the interval [6.83\%,6.86\%], which also does not include the true value 7.2\%). All the values given by the Markov `NISS binned, retarded’ model belong to the interval [7.18\%,7.24\%] around the true value with mean 7.21\% and standard deviation 0.0146\%. The relative error of this mortality prediction is small. It is less than 0.003 (or 0.3\%). This test on the `real death -- simulated transfer' data demonstrates the performance of the proposed method. }

\section{Model refinement \label{Sec:refine}}

We use a coarse model based on the severity of trauma for the evaluation of FOD in the group `OUT30'.
The reason for selection of such a coarse model is that a fraction of cases in this `OUT30' cohort is relatively small 
with respect to the `Available W30D' cases. As we 
can see from Table~\ref{Table:4}, this fraction is relatively small in all cells except
small severities with NISS=1-3 (see also Fig.~\ref{FirstDayTransfer} for the first day transfer). For refinement of the Markov model for this cell, we
compare the age structure of the `Available W30D' and the `OUT30' fractions of this
severity bin (Fig.~\ref{Fig:LowSeverityAges}). We see that the fraction of elderly patients
with low severities in TARN hospitals is high, whereas for patients transferred from
TARN this fraction is much lower. Mortality in the group of patients 74.5+ is much higher
than in the adult group, therefore the model overestimates mortality in the low severity states.
To refine the model let us use two cells for low severity: `NISS 1-3 y' (NISS bin 1-3 and
age $<54.5$) and `NISS  1-3 o' (NISS bin 1-3 and age $>54.5$). This refined model gives a 
significantly different FOD for NISS 1-3. In the cell `NISS 1-3 y' the corrected FOD is
0.54\% and in the cell `NISS  1-3 o' it is 4.08\% (almost eight times greater). The
corrected  overall FOD for NISS 1-3 is 1.42\% versus 2.68\% in the NISS retarded model
without the above refinement.

The effect of the refinement on the FOD for trauma cases is less because the fraction of
traumas with NISS severity 1-3 is relatively small (2.0\%).  For the refined model with
retarded transfer the FOD for transferred patients decreases from 3.79\% (retarded
transfer NISS model) to 3.59\% and the total fraction of death is changed from 6.81\% to
6.78\% (compare to Table~\ref{Table:5}).

\section{Weighting adjustment of {death} cases for further analysis \label{weighting}}

Single imputation of missed values does not reflect the uncertainty in data properly.
From the probabilistic point of view, a datapoint with missed values should be considered
as a conditional probability distribution of the form $$\mathbf{P}(\mbox{missed values }|\mbox{
known values}).$$ Two approaches utilize this idea the {\em multiple imputation} and 
{\em weighting adjustment}.

In the { multiple imputation} approach  several replicas of the database are created,
which differ in the imputed values \cite{Rubin1987,Rubin1996,Graham2007,Sterne2009}. The
distribution of this values should reflect the conditional means and conditional
variances of the imputing attributes. It is not completely clear, how many imputations
should be generated. Rubin claims that ``typically as few as five multiple imputations
(or even three in some cases) is adequate under each model for nonresponse''
\cite{Rubin1996}. Nevertheless, more recently, Graham et al produced practical
recommendations for selection of number of imputations $m$ and demonstrated that a
reasonable choice is $m\geq 20$ and for some cases $m=100$ is not enough
\cite{Graham2007}.  The multiple imputation algorithms are
implemented in the standard statistical software \cite{Royston2004}. 
Sterne et al \cite{Sterne2009} discussed use and misuse of imputation
in epidemiological and clinical research and tried to produce a standard for reporting of
handling of missed data in medical research.

 {It should be stressed that the risk prediction models which used data with gaps and rely on multiple imputation can be misleading, especially with many predictor variables \cite{Trickey2013}. Recently,  it was demonstrated that sensitivity analysis may be more informative than multiple imputation for study of the influence of missing data on risk prediction \cite{Trickey2013}.}

The weighting adjustment approach substitutes a datapoint with missed values by a set of
additional weights on the complete datapoints \cite{Little1988,Kalton1998,Little2003}.
The simplest version of this approach is the {\em cell weighting adjustment}. This 
follows the assumption that complete datapoints within a cell represent the incomplete
datapoints within that cell. An incomplete datapoint within the cell is substituted by
the equidistribution on the complete datapoints there. Of course, cell weighting can
inflate the variances for large cells. In this section, we use cell weighting
adjustments for the handling of missed outcomes. Cells are defined by state $s$ and the
outcome.

We will use the database for evaluation of the death risk for trauma patients. The `Main Group'
selected for further analysis includes the `OUT30' subgroup with 19,289 data cases transferred from TARN
hospitals within 30 days after injury (Fig~\ref{MainGroups}). The targeted outcome (alive
or dead within 30 days after injury) is unknown for these patients. Data without outcome
cannot be used {for risk} evaluation and should be deleted. Let us call the result of
deletion the {\em truncated} database. It is demonstrated in the previous sections {that}
the simple removal of the cases with unknown outcome  shifts the risk estimates;  the
proportion of Dead and Alive outcomes in the truncated database differs from reality and
the risk is overestimated (the pessimistic evaluation). This bias may be compensated by
reweighting of the cases with known outcomes. There are 146,270 such `Available W30D'
cases. In this subsection we estimate weights $w(t,s)$ that should be assigned to the
cases of death on day $t$ after injury with state $s$ to hold the probability of death
for {the} truncated database. For the estimation of the proper FOD that should be kept we
use the Markov model of mortality based on binned NISS (Model~\ref{mod3}) with delayed transfer out of TARN
(after selection dead and recovered patient, see Fig.~\ref{Fig:MainMarkovModelAfter}).
{This model} demonstrates the best verification results (Table~\ref{Table:6}) and is the most 
plausible from the common sense point of view.

{According} to the model, the probability of the patient in state $s$ dying on day $t$
after injury is evaluated as
$$p_d(t,s)=\frac{\Delta D (t,s)+\Delta D_L (t,s)}{H_0(s)} {,}$$
where $H_0(s)=H(1,s)$ is the initial number of patients in state $s$ {on} the first day
after injury. For the truncated data with weights this probability is evaluated as
the ratio of the sums with weights:
\begin{equation}
p_d^{w}(t,s)=\frac{w(t,s)\Delta D (t,s)}{H_0^w(s)},
\end{equation}
 where
\begin{equation}
{H_0^w(s)}=H(31,s) + R(30,s)+\sum_{t=1}^{30} w(t,s)\Delta D(t,s)
\end{equation}
and the superscript $w$ corresponds to the truncated dataset with weights. The numbers
$H(t,s)$, $R(t,s)$ and  $\Delta D(t,s)$ are the same for the original and truncated
datasets.

The probability {of dying} within 30 days from injury is evaluated as {the} proportion
of deaths (we use the model to find $D_L(30,s)$)
$$p_d(s)=\frac{D(30,s)+D_L(30,s)}{H_0(s)} {.}$$ {For} the truncated database $p_d(s)$ is evaluated as {the}
proportion {of weighted} deaths: $$p^w_d(s)=\frac{\sum_{t=1}^{30} w(t,s)\Delta
D(t,s)}{H_0^w(t,s)}. $$ This should be the same number. Therefore, the weighted sum of
deaths for the truncated database is:
$$\sum_{t=1}^{30} w(t,s)\Delta D(t,s)=\frac{p_d(s)}{1-p_d(s)}(H(31,s)+R(30,s)) {.}$$
{The} last expression in the {brackets} is just the number of `Alive within 30 days' outcomes.
Immediately we get
$$H_0^w(s)= {\frac{1}{1-p_d(s)}(H(31,s)+R(30,s))}.$$
{The formula} for the calculation of the weights of death cases in the truncated database is
\begin{equation}
w(t,s)= \frac{p_d(t,s)H_0^w(s) }{\Delta D(t,s)}.
\end{equation}

The weighting procedure changes the number of effective degrees of freedom can affect
the statistical power of the dataset but for the TARN dataset this change is rather minor. 
For example, for the standard problem of the
evaluation of the confidence interval in the proportion estimate the number  of degrees
of freedom $n_w$ in the weighted database with weights $w_i$ is
\begin{equation}\label{WeightedDOF}
n_w=\frac{\left(\sum_i w_i\right)^2}{\sum_i{w_i^2}}.
\end{equation}
For our dataset $n_w=143,574.85$ and the number of Available W30D records is 146,270 (Fig.~\ref{MainGroups}). The
difference of degrees of freedom for the non-weighted and weighted datasets is less than
2\%.

\section{FOD and patterns of mortality}

The models we have developed  { allow} us to evaluate the FOD for various groups of patients. The rich TARN data 
give us the chance of  studying various special groups and detailed stratifications of the trauma cases: by the severities of various injuries in combined traumas, by the age of patients, and by time (day) after trauma. Each example below is supplemented by a medical commentary.

\subsection{Example: FOD as function of age}

The age distribution of
trauma cases and the dependence of FOD on age are shown in Fig.~\ref{Fig:Mortality}.
Here we find surprisingly high accuracy of the piecewise linear approximation of FOD for
adult and elderly patients with a jump in the slope at $\mbox{age}\approx 62$.

\begin{figure}
\begin{centering}
\includegraphics[width= 0.45\textwidth]{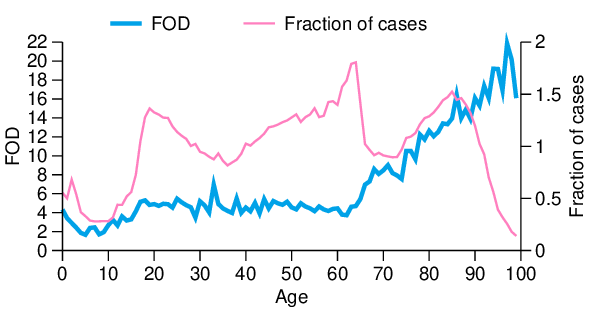}
 \caption{ \label{Fig:Mortality}Age distribution of trauma cases in `Available W30D' group
 and the FOD (corrected) as a function of age. The piecewise linear segmentation of
 FOD(age) has an obvious break point at $\mbox{age}\approx 62$.}
\end{centering}
\end{figure}

The number of cases per year in the dataset drops down at age 65 because for age$\geq 65$ some traumas 
are excluded from the database (see Fig.~\ref{MainGroups}). 

\paragraph{Medical commentary}

The increase in mortality with age is well established. Previous versions of the standard trauma outcome 
prediction system had two different models with an age cutoff at 55 years. More recent models have age
 as a weighted continuous variable with an interaction term between gender and age. There has been 
 a dramatic change in the trauma population over the last 10 years, with a rapid increase in the 
 number of older patients with major injury. Understanding the effects of age on trauma care and 
 adapting to a changing population will be a key challenge for trauma systems in the developed 
 world over the next 10 years.

\subsection{Example: FOD of combined traumas of various severity}

Evaluation of the severity of combined traumas is a classical problem. The very popular
solution is NISS -- sum of squares of three maximal severities, $s_1^2+s_2^2+s_3^2$
($s_1\geq s_2 \geq s_3$) (see, for example, \cite{Lavoie2004,Tay2004,Sullivan2003}). The
best severity score should give the best evaluation of mortality. This is a basic and
rather old idea for defining and comparing  trauma indices \cite{Sacco1977}. Of
course, it is possible to use three (or more) severities together as a multi-dimensional
trauma severity index (`severity profile' \cite{Sacco1988}) but the combination in one
index may be beneficial from different points of view.

The simplest method of combination is:
\begin{itemize}
\item Calculate FOD for every combination of severities for combined traumas for a large
    database;
\item Either use this  FOD instead of the severity score
\item Or find and use a convenient analytic approximation for this FOD (smoothed FOD).
\end{itemize}
Of course, such evaluation of probabilities for several input attributes were used by
many authors and compared to other approaches \cite{TARN8_2006,Brockamp2013}. In this
paper, we use TARN database and evaluate FOD of combined traumas as a function of three input attributes, three biggest
severity scores $s_1\geq s_2 \geq s_3$ (like in NISS). 

We use the  dataset of  146,270 `Available W30D' patients approached TARN during the first
day of injury and remained in TARN or were discharged to a final destination within the first 30
days after injury (Fig.~\ref{MainGroups}). 

Using our models, we calculate estimates with weights which take into account modeled mortality/survival of the
patients transferred from TARN and other patients with unknown outcomes. 
Results for the maximal severity $s_1=5$ are
presented in Table~\ref{Table:8}.  The available case analysis gives qualitatively the same results, hence,  the effects we observe  are not
generated by the reweighting procedure. 

\begin{table}{\footnotesize\begin{center}
\caption{\label{Table:8}FOD for the maximal severity $s_1=5$ and various $s_2$ and $s_3$ for
data after reweighting.}
\begin{tabular}{c|cccccc|}
\cline{2-7} & \multicolumn{6}{c|}{ $s_3$}  \\ \hline
 $s_2$ & 0 & 1 & 2 & 3 & 4 & 5   \\ \hline
 0 & 0.3590 &&&&& \\ \hline
 1 & 0.2324 & 0.2906 &&&& \\ \hline
 2 & 0.1566 & 0.1496 & 0.0791 &&&\\  \hline
 3 & 0.2466 & 0.2064 & 0.1315 & 0.1439 && \\ \hline
 4 & 0.2579 & 0.2881 & 0.1643 & 0.2105 & 0.3113 & \\  \hline
 5 & 0.4073 & 0.5668 & 0.4067 & 0.3666 & 0.4140 & 0.5908\\  \hline
\end{tabular}
\end{center}}
\end{table}

The results presented in Table~\ref{Table:8} seem to be counterintuitive:
FOD for combined injuries with severities $s_1=5$ and $1\leq s_2 \leq 4$ are less than
FOD for $s_2=s_3=0$ and the same maximal severity $s_1=5$. Similar non-monotonic behavior is
observed for other values of the maximal severities. Elementary estimates demonstrate that the
probability $p$ of obtaining these (or larger) deviations to below from the FOD for single injuries
($s_1=5$, $s_2=s_3=0$)  for {\em all} cases with $1\leq s_2 \leq 4$
simultaneously is less than $10^{-10}$. The number of cases used for these
estimates are given in Table~\ref{Table:9}. If the second severity coincides with the
maximal one, $s_2=s_1=5$ then the FOD is larger than for single traumas.

\begin{table}{\footnotesize\begin{center}
\caption{\label{Table:9}Number of cases for the maximal severity $s_1=5$ and various $s_2$
and $s_3$.}
\begin{tabular}{c|cccccc|}
\cline{2-7} & \multicolumn{6}{c|}{ $s_3$}  \\ \hline
 $s_2$ & 0 & 1 & 2 & 3 & 4 & 5   \\ \hline
 0 & 1,376 &&&&& \\ \hline
 1 & 276 & 101 &&&& \\ \hline
 2 & 302 & 163 & 332 &&&\\  \hline
 3 & 577 & 243 & 645 & 1,580 && \\ \hline
 4 & 349 & 140 & 203 & 2,653 & 2,301 & \\  \hline
 5 & 387 & 102 & 95 & 807 & 2,159 & 1,842\\  \hline
\end{tabular}
\end{center}}
\end{table}

It may be convenient to have formulas for estimation of FOD. This smoothed FOD (${\rm
sFOD}_{s_1}$) is found for $s_1=2, \ldots, 5$ as a linear combination of $s_{2,3}$ and
$s^2_{2,3}$ (\ref{sFOD}). For $s_1=1$ the simple formulas do not have much sense and we have
to use a refined model with the inclusion of age (Sec.~\ref{Sec:refine}) The number of
cases is not sufficient for good approximation for this extended model. For $s_1=6$ the
number of cases is not sufficient and we use three bins for trauma severities marked by
the values of the coarse-grained variable $\hat{s}$: $ 0\leq s_2 \leq 2$ ($\hat{s}_2=0$,
48 cases), $3\leq s_2 \leq 4$ ($\hat{s}_2=1$, 53 cases), and $5\leq s_2 \leq 6$
($\hat{s}_2=2$, 38 cases). ${\rm sFOD}_6$ is presented as a quadratic function of
$\hat{s}_2$.

\begin{equation}\label{sFOD}
\begin{split}
  {\rm sFOD}_2=&0.01910 + 0.02124 s_2 + 0.00037 s_3 \\ &-0.01054s_2^2 -0.00084 s_3^2;\\
 {\rm sFOD}_3=&0.02202 + 0.00256 s_2 -0.00238 s_3\\ &+ 0.00099 s_2^2 + 0.00101 s_3^2; \\
 {\rm sFOD}_4=&0.06571 - 0.02075 s_2 - 0.03116 s_3 \\ &+ 0.00706 s_2^2 + 0.01086 s_3^2; \\
 {\rm sFOD}_5=&0.35899 -0.13335 s_2 - 0.10879 s_3 \\ &+ 0.02963 s_2^2+  0.02748 s_3^2; \\
 {\rm sFOD}_6=&0.80297 -0.08750 \hat{s}_2 + 0.06102\hat{s}_2^2.
\end{split}
\end{equation}
All the coefficients are estimated using weighted least squa\-res method. The weight of the
severities combination $(s_1,s_2,s_3)$ is defined as the sum of weights of the
corresponding trauma cases.

\paragraph{Medical commentary}

The complete outcome dataset derived from this work allows all patients to be included in the analysis 
of the effect of combined injuries. The counter-intuitive results from this analysis (some combinations 
of injuries seem to have better outcomes than a single injury of the same severity) provides a fertile 
area for further work. It may be that the explanation is technical, within the way that the continuum 
of human tissue destruction from trauma is reduced to a simple 5 point scale. Each point on the scale 
is actually a band that covers a range of tissue damage. There might also be a true physiological 
explanation for the lower lethality of combined injuries, as each injury absorbs some of the force of 
impact. The same concept is used in Formula 1, where the cars are designed to break into pieces, 
with each piece absorbing some of the impact. In humans there is a well known concept that the 
face can act as a `crumple zone' and mitigate effect of force on the brain. The effect of injury 
combinations shown in Table 6 is a novel finding that requires further analysis.

\subsection{Example. Time after trauma, non-monotone and multimodal mortality coefficients}

In the early 1980s a hypothetical statement was
published that the deaths from trauma have a trimodal distribution with the following
peaks: immediate, early and late death \cite{BakerTrunkey1980,Lowe1983}. This concept was
clearly articulated in a popular review paper in Scientific American \cite{Trunkey1983}.
The motivation for this hypothesis is simple: Trunkey \cite{Trunkey1983} explains that the
 distribution of death is the sum of three peaks:  ``The  first  peak  ({\em `Immediate  deaths'}) 
corresponds  to  people  who  die  very  soon  after  an  injury;  the  deaths 
in  this  category  are  typically  caused  by  lacerations  of  the  brain,  the 
brain stem,  the  upper spinal  cord, the heart or one of the  major blood 
vessels.  The second  peak  ({\em `Early  deaths'})  corresponds  to  people  who 
die within the  first few  hours after an injury;  most of these  deaths are 
attributable  to  major  internal  hemorrhages  or  to  multiple  lesser 
 in­juries  resulting  in  severe  blood  loss.  The  third  peak  ({\em `Late  deaths'}) 
corresponds  to  people  who  die  days  or  weeks  after  an  injury;  these 
deaths  are  usually  due  to  infection  or  multiple  organ  failure.''

Strictly speaking, the {\em sum of three peaks does not have to be a trimodal distribution}. 
Many groups have published refutations of trimodality: they did not find the trimodal
distribution of death. In 1995, Sauaia et al reported that the ``greater proportion of late
deaths due to brain injury and lack of the classic trimodal distribution''
\cite{Sauaia1995}. Wyatt et al could not find this trimodal distribution
in data from the Lothian and Borders regions of Scotland between 1 February 1992 and 31
January 1994 \cite{Wyatt1995}. They hypothesised that this may be (partly) due to
improvements in care.

Recently, more data has become available and many such reports have been published
\cite{Demetriades2005,Knegt2008,Chalkley2010}. 
 {The suggestion that the improvement in care has led to the destruction of the second and third peaks 
has been advanced a number of times \cite{Knegt2008}. }In 2012, Clark et al performed an analysis of the distribution of survival
times after injury using interval censored survival models \cite{Clark2012}. They
considered the trimodal hypothesis of Trunkey as an artifact and provide arguments that
the observed (in some works) second peak is a result of differences in the definition of
death.

\begin{figure*}[!]
\begin{centering}
\includegraphics[width= 0.7\textwidth]{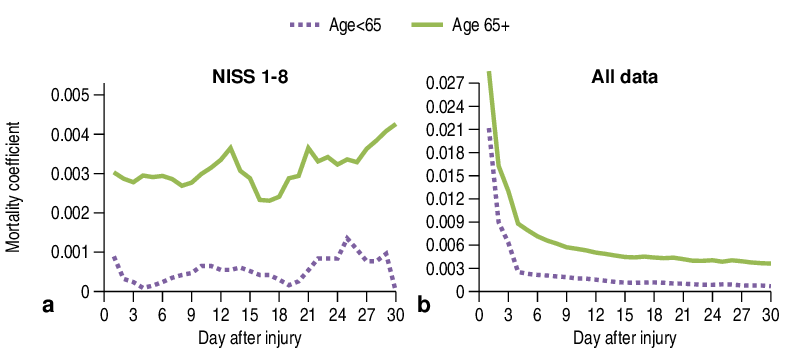}
 \caption{ \label{Fig:MortalityCoeff} Daily coefficient of
 mortality -- evaluated probability of a patient to die on day  $t$ under condition that he/she survived during days $1\div t-1$: a) for NISS=1-8, b) for all dataset. The coefficient is filtered by moving 5-day average starting from the 3rd day. The mortality coefficients are evaluated with the Markov models with retarded transfer. Data for age$<65$ and  {age$\geq 65$} are represented separately.}
\end{centering}
\end{figure*}

%{MortalityNISS.eps}
%{Mortality.eps}
%{MortalityMark.eps}

K. S{\o}reide et al analysed the time distribution from injury to death stratified by
cause of death. They demonstrated that the trimodal structure may be, probably, extracted
from data but its manifestation is model--dependent (see Fig.~6 in \cite{Soreide2007}).
There were several discussion papers published: ``Trimodal temporal distribution of  fatal trauma -- Fact
or fiction?'' \cite{Aldrian2008,Kruger2008}.

The trimodal hypothesis was tested on TARN data \cite{LeckieTARN}. It was demonstrated that
``the majority of in hospital trauma deaths occur soon after admission without further
peaks in mortality''. We reproduce the same results, indeed. But TARN database,  the
largest European trauma database, allows us to make a {\em stratified analysis of
mortality} and the preliminary results demonstrate the richness of the possible patterns of
death.

Let us test the famous Trunkey hypothesis. In Fig.~\ref{Fig:MortalityCoeff} the daily mortality coefficients are presented for low severities (a) (NISS severities 1-8, 27,987 cases in database, 508 death in TARN,  3,983 patients transferred from TARN within 30 days after injury), and for the whole database (b). For the prediction of death in the `OUT30' group we used
the model with retarded transfer.

The non-monotonicity and peaks in the mortality for low severities of injury are
illustrated in Fig.~\ref{Fig:MortalityCoeff}. Further analysis of these patterns
should involve other attributes such as the age of the patient and the type and localization of the
injury.

\paragraph{Medical commentary}

It has been widely accepted that the Trunkey trimodal distribution was a theoretical concept designed to illustrate the different modes of dying following injury. Previous analysis of trauma data has looked at all patients and has not shown any mortality peaks, however this new analysis shows that there are peaks (patterns) if subgroups are studied. The underlying clinical or patient factors are not immediately obvious, but future analysis giving a better understanding of patterns of death could act as a stimulus to look for the clinical correlates of these patterns - with the potential to find modifiable factors. The pattern of death in various subgroups as shown in Figure 7 is a novel finding that requires further analysis.

\section{Discussion}

Handling of data with missed outcomes is one of the first data cleaning tasks. For many
healthcare datasets, the problem of lost patients and missed outcomes (in 30 days, in six
months or any other period of interest) is important. There are two main approaches 
for solving this problem:
\begin{enumerate}
\item To find the lost patients in other national and international databases;
\item To recover the distribution of the missed outcomes and all their correlations
using statistical methods, data mining and stochastic modelling.
\end{enumerate}
Without any doubt the first approach is preferable if it is available: it is better to
have complete information when it is possible. Nevertheless, there may be various
organizational, economical and informational restrictions. It may be too costly to
find the necessary information, or this information may be unavailable or even does not
exist in databases. If there are only small number of lost cases (dozens or even
hundreds) then they may be sought individually. However if there are thousands of losses
then we need either a data integration system with links to appropriate databases
like the whole NHS and ONS data stores (with the assumption that the majority of the
missed data may be taken from these stores) or a system of models for the handling of missed
data, or both because we might not expect all missed data to be found in other databases.

In the TARN dataset, which we analyze in this paper, the outcome is unavailable for 19,289 patients.
The available case study paradigm cannot be applied to deal with missed outcomes because they
are not missed `completely at random'.  Non-stationary Markov models of missed outcomes
allow us to correct the fraction of death. Two na\"ive approaches give 7.20\% (available
case study) or 6.36\% (if we assume that all unknown outcomes are `alive'). The corrected
value is 6.78\% (refined model with retarded transfer). The difference between the
corrected and na\"ive models is significant, whereas the difference between different
Markov corrections is not significant despite the large dataset.

Non-stationary Markov models for unknown outcomes can utilize any scheme of predictive
models with using any set of available attributes. We demonstrate the construction of such models using 
maximal severity model, binned NISS model and binned NISS supplemented by the age
structure at low severities. We use weighting adjustment to compensate for the effect of
unknown outcomes. The large TARN dataset allows us to use this method without significant
damage to the statistical power.

Analysis of mortality for a combination of injuries gives an unexpected result. If
$s_1\geq s_2 \geq s_3$ are the three maximal severities of injury in a trauma case then
the expected mortality (FOD) is not a monotone function of $s_3$, $s_3$, under given $s_1$. For
example, for $s_1=4, 5$ expected FOD first decreases when $s_{2,3}$ grow from 0 to 1-2 and then
increases when $s_2$ approaches $s_1$.  {Probably more attributes, such as type of injury (blunt/penetrating), localization of traumas, gender, or airway status of the patient should be taken into account for further analysis to resolve this puzzle.}

Following the seminal Trunkey paper \cite{Trunkey1983}, multimodality of the mortality
curves is a widely discussed problem. For the complete TARN dataset the coefficient of
mortality monotonically decreases in time but stratified analysis of the mortality
gives a different result: for lower severities FOD is a
non-monotonic function of the time after injury and may have maxima at the second and
third weeks after injury. Perhaps, this effect may be (partially) related to geriatric
traumas.

It is important to stress that both effects, non-monotone  dependence of mortality on the severity vector of combined traumas and multimodality of the mortality curves for low severities, do not depend on the method of mortality correction. These effects manifest themselves for both na\''ive approaches as well as for Markov models.

We found that the age distribution of trauma cases is strongly multimodal
(Fig.~\ref{Fig:Mortality}). This is important for healthcare
    planning.

The next step should be the handling of missed values of input attributes in the TARN database
Firstly,  we should follow the ``Guidelines for reporting any analysis potentially
affected by missing data'' \cite{Sterne2009}, report the number of missing values for
each variable of interest, and try to ``clarify whether there are important differences
between individuals with complete and incomplete data''. Already preliminary analysis of the 
patterns in the distribution of the missed input data in the TARN
dataset demonstrates that the gaps in data are highly correlated and need further careful
analysis. Secondly, we have to test and compare various methods of handling missing input attributs in the TARN database.

It is not necessary to analyse all attributes in the database for mortality prediction and risk evaluation. 
It is demonstrated that there may exist an optimal set of input attributes for mortality prediction in 
emergency medicine and additional variables may even reduce the value of predictors \cite{Goodacre2006}. 
Therefore, before the analysis of imputation efficiency, it is necessary to select the set of most relevant variables of interest.

The models  developed in this case study can be generalized in several directions. Firstly,
for trauma datasets, different attributes could be included in the `state' $s$ for the
non-stationary Markov models (Figs.~\ref{Fig:MortalityModelBefore},
\ref{Fig:MainMarkovModelAfter}). We did not explore all such possibilities but have
studied just simple models of the maximal severity (Model~\ref{mod2}) and binned NISS (Model~\ref{mod3}). An example of
model refinement with inclusion of age in the state variable $s$  is presented in
Section~\ref{Sec:refine}. Secondly, the `two stage lottery' non-stationary Markov model
could be used as a general solution applicable to any health dataset where `TRANSFER
IN' or `TRANSFER OUT' is a feature. Transfer between hospitals is common in healthcare,
therefore, we expect that models of this type will be useful for all large healthcare
data repositories.

\section{Summary}

\begin{enumerate}
\item The Trauma Audit and Research Network (TARN) have collected the largest European trauma
    database. We have analysed 192,623 cases from the TARN database. We excluded from the analysis
    16,693 patients (8.67\%), who arrived into TARN hospitals later than 24
    hours after injury. The other 146,270 patients (75.94\%) approached TARN during the first
    day of injury and remained in TARN or discharged to a final destination within 30
    days of injury. 19,289 patients (13.19\%) from this group transferred from
    TARN to another hospital or institution (or unknown destination) within 30 days of 
    injury. For this subgroup the outcome is unknown.
\item Analysis of the missed outcomes demonstrated that they cannot be considered as
    misses `completely at random'. Therefore, the analysis of available cases is not applicable for the TARN database.
    Special efforts are needed to handle data with missed outcomes.
\item We have developed a system of non-stationary Mar\-kov models for the handling of missed
    outcomes  and validated these models on the data arising from patients who moved to TARN (and excluded from the
    model fitting). We have analysed  mortality in the TARN database using the Markov models which we have developed and also validated.
\item The results of analysis were used for weighting adjustment in the available cases
    database (reweighting of the death cases). The database with adjusted weights can be
    used for further data mining tasks and will keep the proper fraction of deaths.
\item The age distribution of trauma cases is essentially multimodal, which is 
    important for healthcare planning.
\item Our analysis of the mortality coefficient in the TARN database demonstrates that (i)
    for complex traumas the fraction of death is not a monotone function of all
    severities of injuries and (ii) for lower severities
    the fraction of death is not a monotonically decreasing function of time after
    injury and may have intermediate peaks in the second and third weeks after injury.
\item The approach developed here can be applied to various healthcare
    datasets which have the problem of lost patients, inter--hospitals transfer and missing outcomes.
\end{enumerate}

\section*{Acknowledgements}Supported by TARN and the University of Leicester.

\vspace{15mm}

{\bf Evgeny Mirkes} (Ph.D., Sc.D.) is a Research Fellow at the University of Leicester.
He worked for Russian Academy of Sciences, Siberian Branch, and Siberian Federal
University (Krasnoyarsk, Russia). His main research interests are biomathematics, data
mining and software engineering, neural networks and artificial intelligence. He led and
supervised many medium-sized projects in data analysis and development of
decision-support systems for computational diagnosis and treatment planning.

{\bf Timothy J. Coats} (FRCS (Eng), MD, FCEM) is a Professor of Emergency Medicine at the
University of Leicester. Chair FAEM Research Committee 2000-2009, Chair Trauma Audit and
Research Network (TARN), Chair NIHR Injuries and Emergencies National Specialist Group.
Research Interests: Diagnostics and monitoring in Emergency Care, Coagulation following
injury, Predictive modeling of outcome following injury.

{\bf Jeremy Levesley } (Ph.D, FIMA) is a Professor in the Department of Mathematics at the University of Leicester. His
research area is kernel based approximation methods in high dimensions, in Euclidean space and on manifolds.
He is interested in developing research at the interface of mathematics and medicine, and sees interpretation of 
medical data sets as a key future challenge for mathematics.

{\bf Alexander N. Gorban} (Ph.D., Sc.D., Professor) holds a Personal Chair in applied
mathematics at the University of Leicester since 2004. He worked for Russian Academy of
Sciences, Siberian Branch (Krasnoyarsk, Russia), and ETH Z\"urich (Switzerland), was a
visiting Professor and Research Scholar at Clay Mathematics Institute (Cambridge, MA),
IHES (Bures-sur-Yvette, \^{I}le de France), Cou\-rant Institute of Mathematical Sciences
(New York), and Isaac Newton Institute for Mathematical Sciences (Cambridge, UK). His
main research interests are dynamics of systems of physical, chemical and biological
kinetics; biomathematics; data mining and model reduction problems.

\end{document}